\newtheorem{definition}{Definition}
\newtheorem{assumption}{Assumption}
\newtheorem{proposition}{Proposition}
\DeclareMathOperator{\MSE}{MSE}
\DeclareMathOperator{\cayley}{Cayley}
\DeclareMathOperator{\Lip}{Lip}
\DeclareMathOperator{\diag}{diag}
\DeclareMathOperator{\sign}{sign}
\renewcommand{\cal}[1]{\mathcal{ #1 }}
\newcommand{\bb}[1]{\mathbb{ #1 }}
\newcommand{\grad}{\nabla}
\newcommand{\union}{\cup}
\newcommand{\R}{\bb{R}}
\DeclarePairedDelimiterX{\Set}[2]\{\}{%
\title{\LARGE \bf
Data-Efficient System Identification via Lipschitz Neural Networks
}
\author{Shiqing Wei, Prashanth Krishnamurthy, and Farshad Khorrami
\thanks{The authors are with Control/Robotics Research Laboratory, Department of ECE, NYU Tandon School of Engineering, 5 MetroTech Center, Brooklyn, NY 11201, USA. {\tt\small \{shiqing.wei, prashanth.krishnamurthy, khorrami\}@nyu.edu}}%
\thanks{This work was supported in part by ARO grant W911NF-21-1-0155 and by the New York University Abu Dhabi (NYUAD) Center for Artificial Intelligence and Robotics (CAIR), funded by Tamkeen under the NYUAD Research Institute Award CG010.}
}
\begin{document}
\maketitle
\thispagestyle{empty}
\pagestyle{empty}

\begin{abstract}
Extracting dynamic models from data is of enormous importance in understanding the properties of unknown systems. In this work, we employ Lipschitz neural networks, a class of neural networks with a prescribed upper bound on their Lipschitz constant, to address the problem of data-efficient nonlinear system identification. Under the (fairly weak) assumption that the unknown system is Lipschitz continuous, we propose a method to estimate the approximation error bound of the trained network and the bound on the difference between the simulated trajectories by the trained models and the true system. Empirical results show that our method outperforms classic fully connected neural networks and Lipschitz regularized networks through simulation studies on three dynamical systems, and the advantage of our method is more noticeable when less data is used for training.

\end{abstract}

\section{Introduction}
Dynamic models play a crucial role in understanding and forecasting behaviors across various research fields, such as biology, physics, and engineering \cite{pillonetto2014kernel, brunton2016discovering}. Despite the extensive application of dynamic models, the governing equations are typically derived under ideal conditions, not only requiring specific domain knowledge but also having a non-negligible mismatch with respect to the real-world problem. The alternative/inverse process of extracting mathematical models from observed data is commonly referred to as system identification \cite{ljung1998system}. In this work, we leverage recent advances in machine learning and propose a novel system identification method using neural networks.

More specifically, we address system identification for nonlinear dynamical systems via Lipschitz neural networks, a class of neural networks with a prescribed Lipschitz bound. Under the (fairly weak) assumption that the underlying unknown system is Lipschitz continuous, we propose a bound on the approximation error of our method on a given state space and characterize the difference between the simulated solution by our method and the solution to the true system. Empirical results show that soft regularization can result in a high variance of the Lipschitz bound of the trained network, and the benefits of our method are more evident when less training data is used. 

A variety of approaches have been proposed in the field of system identification. Sparse regression-based methods find sparse coefficients among a user-defined functional class to find a good fit to the measurement data and have been used to identify both ODEs (\cite{brunton2016discovering, egan2024automatically}) and PDEs (\cite{rudy2017data}). Koopmanism-based approaches (e.g., \cite{nathan2018applied}) convert the system identification problem to a linear identification problem in the space of the observables (which is usually finite-dimensional for practical implementation). Additionally, many kernel-based techniques have also been proposed for system identification in the linear case (see \cite{pillonetto2014kernel} for a comprehensive review of the kernel methods on this topic). Typically, domain knowledge and experience play an important part in the aforementioned methods, as the performance of these methods is dependent on the functional class (or kernels) provided by the user.

With the availability of powerful computation resources, deep learning becomes an emergent approach to system identification. The authors of \cite{rahman2022neural} employ neural ordinary differential equations (NODEs) for system identification, and the training is conducted to reduce the mismatch between the system trajectory and the trajectory simulated by NODEs. A particularly relevant line of work, \cite{negrini2021system, negrini2023robust}, proposes Lipschitz regularized networks (LRNs) for system identification, where an additional term penalizing the Lipschitz constant of the networks is added to the training loss. In this work, instead of penalizing the Lipschitz constant, we employ Lipschitz neural networks that come with a natural regularization through the bound on its Lipschitz constant. The Lipschitz continuity of such neural networks results from a specific parameterization of the trainable parameters (e.g., \cite{wang2023direct, havens2024exploiting}), and ongoing research shows that they are as accurate as classic neural networks in classification tasks \cite{bethune2022pay}.

\textit{Our contributions:} (1) We design and develop a novel learning framework that uses Lipschitz neural networks for nonlinear system identification. (2) We propose a method to bound the approximation error of our method (when the unknown system also has Lipschitz continuous dynamics) and a bound on the difference between the simulated solution by our method and the solution to the true system. (3) We demonstrate the effectiveness of our method through simulations on a linear system, the Van der Pol oscillator, and a two-link planar robot arm. Comparison studies show that our method outperforms classic fully connected neural networks and the LRNs, and the benefits of our method are more pronounced when less data is used. 

\textit{Notations:} Let $\R$ be the set of real numbers, $\mathbb{D}_{++}^n$ be the set of $n \times n$ positive definite diagonal matrices, and $I$ be the identity matrix of proper dimensions. $\lVert \cdot \rVert_p$ is the $\ell_p$ norm of a vector or the induced norm of a matrix. $\diag(\cdot)$ gives a diagonal matrix from a vector $v$. $|\cal{S}|$ represents the cardinality of a set $\cal{S}$. $B_p(x_0, r) = \{ x \in \R^n \; |\; \lVert x-x_0 \rVert_p \leq r \}$ is the $\ell_p$ (with $p = 1, ..., \infty$) ball centered at $x_0 \in \R^n$ with $r > 0$.

\section{Preliminaries}
In this section, we introduce a class of neural network layers with prescribed Lipschitz properties. 
\begin{definition}\label{def:lip_continuous}
    A function $\phi: \R^n \rightarrow \R^m$ is Lipschitz continuous (or $\gamma$-Lipschitz), if for all $x_1, x_2 \in \R^n$, there exists a positive real constant $\gamma$ such that
    \begin{equation}\label{eq:def_lip}
        \lVert \phi(x_1)-\phi(x_2) \rVert_2 \leq \gamma \lVert x_1 - x_2 \rVert_2.
    \end{equation}
    The smallest constant such that \eqref{eq:def_lip} holds is called the Lipschitz constant of $\phi$, which is denoted by $\Lip(\phi)$.
\end{definition}

Many neural networks can be seen as a sequential composition of different layers. Denote by $h_{\mathrm{in}} \in \R^{n_{\mathrm{in}}}$ and $h_{\mathrm{out}} \in \R^{n_{\mathrm{out}}}$ the input and output of a certain layer, respectively. The following 1-Lipschitz layer was proposed in \cite{wang2023direct}.
\begin{proposition}[Theorem 3.2 \cite{wang2023direct}]
    Let $\Psi \in \mathbb{D}_{++}^{n_{\mathrm{out}}}$, $A \in \R^{n_{\mathrm{out}} \times n_{\mathrm{out}}}$, $B \in \R^{n_{\mathrm{out}} \times n_{\mathrm{in}}}$, and $b \in \R^{n_{\mathrm{out}}}$. If $A A^\top + B B^\top = I$ and $\sigma$ is an activation function\footnote{$\sigma$ is applied element-wise.} with slope restricted in $[0,1]$, then
    \begin{equation}\label{eq:one_lip_layer}
        h_{\mathrm{out}} = \sqrt{2} A^\top \Psi \sigma \left( \sqrt{2} \Psi^{-1} B h_{\mathrm{in}} + b \right)
    \end{equation}
    is a 1-Lipschitz layer.
\end{proposition}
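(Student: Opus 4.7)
The plan is to verify the Lipschitz bound $\lVert h_{\mathrm{out}}^{(1)} - h_{\mathrm{out}}^{(2)} \rVert_2 \leq \lVert h_{\mathrm{in}}^{(1)} - h_{\mathrm{in}}^{(2)} \rVert_2$ for an arbitrary pair of inputs by combining three ingredients: (i) the slope restriction on $\sigma$ written as an incremental quadratic constraint with weighting $\Psi^2$, (ii) the orthonormality relation $AA^\top = I - BB^\top$, and (iii) a completion-of-squares identity that ties the previous two together.

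First, I would fix two inputs $h_{\mathrm{in}}^{(1)}, h_{\mathrm{in}}^{(2)} \in \R^{n_{\mathrm{in}}}$ and introduce the shorthand $\Delta h_{\mathrm{in}} = h_{\mathrm{in}}^{(1)} - h_{\mathrm{in}}^{(2)}$, $\Delta y = \sqrt{2}\Psi^{-1} B \Delta h_{\mathrm{in}}$, $\Delta z = \sigma(\sqrt{2}\Psi^{-1} B h_{\mathrm{in}}^{(1)} + b) - \sigma(\sqrt{2}\Psi^{-1} B h_{\mathrm{in}}^{(2)} + b)$, and $\Delta h_{\mathrm{out}} = \sqrt{2} A^\top \Psi \Delta z$. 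Because each scalar component of $\sigma$ has slope in $[0,1]$, we have $\Delta z_i (\Delta y_i - \Delta z_i) \geq 0$ for every coordinate $i$; weighting by the positive diagonal entries of $\Psi^2$ and summing yields the incremental quadratic constraint
\[
\sqrt{2}\,\Delta z^\top \Psi B \Delta h_{\mathrm{in}} \;=\; \Delta z^\top \Psi^2 \Delta y \;\geq\; \Delta z^\top \Psi^2 \Delta z.
\]

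Next, I would expand the output norm using $AA^\top = I - BB^\top$,
\[
\lVert \Delta h_{\mathrm{out}} \rVert_2^2 \;=\; 2\Delta z^\top \Psi A A^\top \Psi \Delta z \;=\; 2\Delta z^\top \Psi^2 \Delta z - 2\Delta z^\top \Psi B B^\top \Psi \Delta z,
\]
and substitute the slope-restriction inequality into the first term to obtain
\[
\lVert \Delta h_{\mathrm{out}} \rVert_2^2 \;\leq\; 2\sqrt{2}\,\Delta z^\top \Psi B \Delta h_{\mathrm{in}} - 2\Delta z^\top \Psi B B^\top \Psi \Delta z.
\]

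Finally, I would observe that expanding the trivial inequality $\lVert \sqrt{2} B^\top \Psi \Delta z - \Delta h_{\mathrm{in}} \rVert_2^2 \geq 0$ gives exactly
\[
2\Delta z^\top \Psi B B^\top \Psi \Delta z - 2\sqrt{2}\,\Delta z^\top \Psi B \Delta h_{\mathrm{in}} + \lVert \Delta h_{\mathrm{in}} \rVert_2^2 \;\geq\; 0,
\]
so the preceding bound simplifies to $\lVert \Delta h_{\mathrm{out}} \rVert_2^2 \leq \lVert \Delta h_{\mathrm{in}} \rVert_2^2$, as required. The main conceptual obstacle is choosing the correct weighting $\Psi^2$ for the slope constraint and guessing the completion-of-squares form $\lVert \sqrt{2} B^\top \Psi \Delta z - \Delta h_{\mathrm{in}} \rVert_2^2$; once these are in hand, the algebra is mechanical. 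The $\sqrt{2}$ prefactors and the placement of $\Psi^{\pm 1}$ in \eqref{eq:one_lip_layer} are calibrated precisely so that the cross terms from the slope inequality and from the completed square cancel, leaving the clean unit-Lipschitz estimate.
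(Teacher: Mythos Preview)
Your argument is correct: the incremental quadratic constraint $\Delta z^\top \Psi^2(\Delta y - \Delta z)\geq 0$ obtained from the $[0,1]$ slope restriction, the substitution $AA^\top = I - BB^\top$, and the completion of squares $\lVert \sqrt{2}B^\top\Psi\Delta z - \Delta h_{\mathrm{in}}\rVert_2^2\geq 0$ combine exactly as you describe to give $\lVert\Delta h_{\mathrm{out}}\rVert_2^2\leq\lVert\Delta h_{\mathrm{in}}\rVert_2^2$. Each algebraic step checks out.

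As for the comparison: the paper does not supply its own proof of this proposition. It is quoted verbatim as Theorem~3.2 of \cite{wang2023direct} and used as a black box, so there is nothing in the paper to compare your argument against. Your derivation is in fact the standard route taken in the cited reference, where the slope condition is encoded as a diagonally weighted incremental sector inequality and the orthogonality constraint $AA^\top+BB^\top=I$ is used to eliminate $A$; the choice of weighting $\Psi^2$ and the specific completed square are precisely what makes the $\sqrt{2}$ scalings and the $\Psi^{\pm 1}$ placements in \eqref{eq:one_lip_layer} the ``right'' ones, as you observe.
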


The weight matrices $A$ and $B$ in \eqref{eq:one_lip_layer} can be obtained from the Cayley transform (see \cite{trockman2021orthogonalizing}). More specifically, for any matrices $X \in \R^{n_{\mathrm{out}} \times n_{\mathrm{out}}}$ and $Y \in \R^{n_{\mathrm{in}} \times n_{\mathrm{out}}}$, let
\begin{equation}\label{eq:cayley_transform}
    \begin{bmatrix}
        A^\top \\ B^\top 
    \end{bmatrix} = 
    \cayley \left( \begin{bmatrix}
        X^\top \\ Y^\top 
    \end{bmatrix} \right) = 
    \begin{bmatrix}
        (I+Z)^{-1} (I-Z) \\
        -2Y (I+Z)^{-1}
    \end{bmatrix}
\end{equation}
where $Z = X-X^\top + Y^\top Y$. As $I+Z$ is nonsingular\footnote{This is because $Z$ is the sum of a skew-symmetric matrix and a positive semidefinite matrix, and therefore is positive semidefinite.} and $(I-T)(I+T)^{-1} = (I+T)^{-1}(I-T)$ holds for a square matrix $T$ if $I+T$ is nonsingular (therefore this holds for $T=Z$ and $T = Z^\top$), we can verify that $A A^\top + B B^\top = I$. As for the matrix $\Psi$ in \eqref{eq:one_lip_layer}, it can be constructed by 
\begin{equation}
    \Psi = \diag([e^{v_1}, e^{v_2}, \ldots, e^{v_{n_{\mathrm{out}}}} ])
\end{equation}
where $v$ can be any vector in $\R^{n_{\mathrm{out}}}$.

\section{System Identification using Lipschitz Neural Networks} \label{sec:sys_id_through_lip_nn}
\subsection{Lipschitz Neural Networks}
Lipschitz neural networks are a class of neural networks with a prescribed upper bound of the Lipschitz constant. Let $n_i$ be the output dimension of the $i$-th layer. Define the following network $\Phi: \R^n \rightarrow \R^{n_L}$ with $L$ layers:
\begin{subequations}\label{eq:nn}
    \begin{align}
        \phi_{L-1} (x) &= h_{L-1} \circ h_{L-2} \circ \cdots \circ h_1 \circ F (x) \\
        \phi_L (x) &= \gamma' B_L \phi_{L-1} (x) \label{eq:nn_phi} \\
        \Phi (x) &= \phi_L (x) - \phi_L (0), \label{eq:zero_at_zero}
    \end{align}
\end{subequations}
where $\gamma' > 0$ is a design parameter, $h_1, h_2, \ldots, h_{L-1}$ are 1-Lipschitz layers as in \eqref{eq:one_lip_layer}, $B_L \in \R^{n_L \times n_{L-1} }$ such that $\lVert B_L \rVert_2 \leq 1$, and $F: \R^n \rightarrow \R^n$ is an affine function
\begin{equation}\label{eq:affine_func}
    F(x) = A_F (x - b_F)
\end{equation}
with $b_F \in \R^n$ and $A_F \in \R^{n \times n}$. We see that $\Phi(0) = 0$ by the design in \eqref{eq:zero_at_zero}. The affine function $F$ is chosen to perform a linear transformation to center and normalize the input data (and then fixed during training), which is a common practice to accelerate and stabilize the training of neural networks \cite{lecun2002efficient}. To obtain the matrix $B_L$ with $\lVert B_L \rVert_2 \leq 1$, we employ again the Cayley transform in \eqref{eq:cayley_transform} to obtain a pair of matrices $(A_L, B_L)$ with $A_L A_L^\top + B_L B_L^\top = I$, and keep only the matrix $B_L$. The next result shows that $\Phi$ has a prescribed bound on its Lipschitz constant. 

\begin{proposition} \label{prop:lip_bound}
    The neural network $\Phi$ in \eqref{eq:nn} is $\gamma$-Lipschitz with $\gamma = \gamma' \lVert A_F \rVert_2$.
\end{proposition}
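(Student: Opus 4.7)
The plan is to exploit the basic compositional property of Lipschitz maps: for two Lipschitz functions $f$ and $g$, $\Lip(f \circ g) \leq \Lip(f)\,\Lip(g)$, and translations of the output by a constant do not change the Lipschitz constant. The proof should then just be a walk down the layers of \eqref{eq:nn}, accumulating the per-layer Lipschitz constants.

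First I would compute $\Lip(F)$. Since $F(x)=A_F(x-b_F)$ is affine, for any $x_1,x_2\in\R^n$ we have $\lVert F(x_1)-F(x_2)\rVert_2 = \lVert A_F(x_1-x_2)\rVert_2 \leq \lVert A_F\rVert_2\,\lVert x_1-x_2\rVert_2$, so $\Lip(F)\leq \lVert A_F\rVert_2$. Next, by the earlier proposition (Theorem 3.2 of \cite{wang2023direct}) each layer $h_i$ in \eqref{eq:nn} is $1$-Lipschitz, so by the composition rule $\Lip(h_{L-1}\circ\cdots\circ h_1)\leq 1$. Composing once more with $F$ gives $\Lip(\phi_{L-1})\leq \lVert A_F\rVert_2$.

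Then I would handle the final linear map and the centering. The map $y\mapsto \gamma' B_L y$ is linear with $\Lip(\gamma' B_L\,\cdot\,) = \gamma'\lVert B_L\rVert_2 \leq \gamma'$ by construction of $B_L$ from the Cayley transform, yielding $\Lip(\phi_L)\leq \gamma'\lVert A_F\rVert_2$. Finally, $\Phi(x)=\phi_L(x)-\phi_L(0)$ differs from $\phi_L$ only by an additive constant, so
\begin{equation*}
\lVert \Phi(x_1)-\Phi(x_2)\rVert_2 = \lVert \phi_L(x_1)-\phi_L(x_2)\rVert_2 \leq \gamma'\lVert A_F\rVert_2\,\lVert x_1-x_2\rVert_2,
\end{equation*}
which gives the claim with $\gamma=\gamma'\lVert A_F\rVert_2$.

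There is no real obstacle here; the statement is essentially a bookkeeping exercise once the composition rule for Lipschitz constants and the $1$-Lipschitz property of each $h_i$ are in hand. The only minor points to be careful about are (i) invoking $\lVert B_L\rVert_2\leq 1$ as a consequence of the Cayley parameterization $A_LA_L^\top + B_LB_L^\top = I$ (which implies $B_LB_L^\top \preceq I$), and (ii) observing that subtracting the constant $\phi_L(0)$ in \eqref{eq:zero_at_zero} is used to enforce $\Phi(0)=0$ but is irrelevant to the Lipschitz bound.
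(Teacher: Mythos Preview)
Your proposal is correct and follows essentially the same approach as the paper: bound $\Lip(\phi_L)$ by the product of the per-layer Lipschitz constants $\gamma' \lVert B_L\rVert_2 \prod_i \Lip(h_i)\cdot \lVert A_F\rVert_2 \leq \gamma' \lVert A_F\rVert_2$, then observe that subtracting the constant $\phi_L(0)$ leaves the Lipschitz bound unchanged. The paper's proof is slightly terser but the logic is identical.
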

\begin{proof}
    We first prove that $\phi_L$ in \eqref{eq:nn_phi} is $\gamma$-Lipschitz. Noting that $\phi_L$ is the composition of Lipschitz continuous functions $F, h_1, \ldots, h_{L-1}$, and $x \mapsto \gamma' B_L x + b_L$, we thus have 
    \begin{equation*}
        \Lip (\phi_L) \leq \gamma' \lVert A_F \rVert_2 \lVert B_L \rVert_2 \prod_{i=1}^{L-1} \Lip (h_i) \leq \gamma' \lVert A_F \rVert_2 = \gamma
    \end{equation*}
    as $\lVert B_L \rVert_2 \leq 1$ and $h_1, \ldots, h_{L-1}$ are all 1-Lipschitz. To see that $\Phi$ is also $\gamma$-Lipschitz, take $x, y \in \R^n$, and we have $\lVert \Phi(x) - \Phi(y) \rVert_2 = \lVert \phi_L(x) - \phi_L(y) \rVert_2 \leq \gamma \lVert x - y \rVert_2.$
\end{proof}

\subsection{System Identification}\label{sec:sys_id}
Our objective is to identify the following nonlinear system
\begin{equation}\label{eq:dynamics}
    \dot{x} = f(x)
\end{equation}
where $x \in \cal{X}$ is the state, $\cal{X} \subset \R^n$ is the state space (a subset of $\R^n$ to model state constraints), and $f: \cal{X} \rightarrow \R^n$ is $K$-Lipschitz continuous (in the $\ell_2$ norm) on $\cal{X}$, guaranteeing the existence and uniqueness of the solution on $\cal{X}$. Without loss of generality, we assume that $f(0) = 0$. If this is not the case, we can define the system using the new state variable $x' = x-x_e$ where $x_e$ is an equilibrium point of the system. Typically, $f$ is unknown or partially known. The task is to approximate $f$ numerically from data using Lipschitz neural networks, i.e.,  $\Phi$ is the estimate generated for $f$.

We assume that full-state measurements are available and collect the trajectories of $x(t)$ sampled at discrete time steps starting from different initial conditions. This is a common setting adopted in many studies in system identification (e.g., see \cite{brunton2016discovering, egan2024automatically, negrini2021system}). Realistically, $\dot{x}(t)$ is usually not available for direct measurement, and its approximated value $\widehat{\dot{x}}(t)$ is obtained by numerical differentiation of $x(t)$.

Let the collected dataset be $\cal{D} = \{ (x_i, y_i)  \; | \; x_i = x(t_i),\ y_i = \widehat{\dot{x}}(t_i),\ i = 1, \ldots, N\}$ where $t_i$ are the sampled timestamps. Then, we calculate the sample mean and variance of all $x(t_i)$ to obtain $b_F$ and $A_F$ in \eqref{eq:affine_func} (to center and normalize the input data). The dataset is randomly split into a training set $\cal{D}_{\mathrm{train}}$ and a test set $\cal{D}_{\mathrm{test}}$. The mean squared error (MSE) has been a classic criterion in evaluating the model's accuracy in many works on system identification (e.g., \cite{negrini2021system, roll2005nonlinear}). Define the MSE on a set $\cal{S}$ as 
\begin{equation}\label{eq:mse}
    \MSE (\cal{S}, \Phi) = \frac{1}{|\cal{S}|} \sum_{(x_i, y_i) \in \cal{S}} \lVert y_i - \Phi(x_i) \rVert_2^2.
\end{equation}
We train the neural network $\Phi$ by minimizing the MSE on the batches (subsets) of $\cal{D}_{\mathrm{train}}$. The training process is summarized in Algorithm \ref{algo:sys_id}.

\begin{algorithm}[t]
\SetAlgoLined
\KwIn{Neural network $\Phi$, dataset $\cal{D}$, learning rate scheduler StepLR, and number of epochs $N$.} 

Initialize and fix the values of $A_F$ and $b_F$ in \eqref{eq:affine_func} based on the sample mean and variance of $\cal{D}$\;

Randomly split $\cal{D}$ into $\cal{D}_{\mathrm{train}}$ and $\cal{D}_{\mathrm{test}}$\;

Further divide $\cal{D}_{\mathrm{train}}$ into batches $\cal{B}_1, \ldots, \cal{B}_M$\;

\For{$i = 1, 2, ..., N$}{
    Determine the learning rate $\alpha_i$ from StepLR\;
    
    \For{$j = 1, 2, ..., M$}{
        Calculate $\cal{L} = \MSE(\cal{B}_j, \Phi)$\ and $g = \grad_{\theta} \cal{L}$ where $\theta$ are the trainable parameters of $\Phi$\;

        Clip the gradient by $g \leftarrow g/\lVert g \rVert_2$ if $\lVert g \rVert_2 > 1$\;

        Update $\theta$ by $\theta \leftarrow \theta - \alpha_i g$\;
    }

    Calculate $\cal{L}_{\mathrm{test}} = \MSE(\cal{D}_{\mathrm{test}}, \Phi)$ and keep the $\theta^\star$ that gives the minimal $\cal{L}_{\mathrm{test}}$ during training\;
}
\caption{System id. via Lipschitz networks}
\label{algo:sys_id}
\end{algorithm}

\subsection{Verification of the Lipschitz Neural Networks}\label{sec:eval}
In this section, we present the theoretical analysis of verifying Lipschitz neural networks, e.g., provable bounds on the deviation between the learned and actual models and on the deviation between the trajectories of the estimated and actual dynamics.

The test set $\cal{D}_{\mathrm{test}}$ can be seen as a collection of independent and identically distributed (i.i.d) samples from the true data distribution $\rho$. Then, Hoeffding inequality gives us a bound on the difference between the actual MSE and the empirical one using a probabilistic description \cite[Section 10.1]{tempo2013randomized}. In the following, we seek to evaluate the accuracy of $\Phi$ from a deterministic perspective. 

\begin{assumption}\label{ass:bound_on_est_der}
    For all $(x_i, y_i) \in \cal{D}$, there exists a constant $c > 0$ such that $\lVert y_i - f(x_i) \rVert_2 \leq c $.
\end{assumption}

The next result introduces a bound on the maximum deviation of $\Phi$ from $f$ on a compact set.

\begin{proposition}\label{prop:max_error_bound}
Let $e_i = \Phi(x_i) - y_i$ for all $(x_i, y_i) \in \cal{D}$,  $\cal{X} \in \R^n$ be compact, and $C = \{ B_p (x_i, r_i) \}_{(x_i, y_i) \in \cal{D}}$ be a cover of $\cal{X}$, i.e., $\cal{X} \subseteq \cup_{(x_i, y_i) \in \cal{D}} B_p (x_i, r_i)$. If Assumption \ref{ass:bound_on_est_der} holds, then $\max_{x \in \cal{X}} \lVert \Phi(x) - f(x) \rVert_2 \leq c + \max_{(x_i, y_i) \in \cal{D}} \left[ n^{\max(0,\frac{1}{2} - \frac{1}{p})} (K+\gamma)r_i + \lVert e_i \rVert_2 \right]$ where $K$ is the Lipschitz bound of $f$.
\end{proposition}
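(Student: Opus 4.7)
The plan is to fix an arbitrary $x \in \cal{X}$, find a ball in the cover that contains it, and then bound $\lVert \Phi(x) - f(x)\rVert_2$ by inserting three intermediate terms (the center of the ball under $\Phi$, the data label $y_i$, and the center under $f$) and applying the triangle inequality.

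First, since $C$ covers $\cal{X}$, there is some $(x_i, y_i) \in \cal{D}$ with $\lVert x - x_i \rVert_p \leq r_i$. Using the triangle inequality,
\begin{equation*}
\lVert \Phi(x) - f(x) \rVert_2 \leq \lVert \Phi(x) - \Phi(x_i) \rVert_2 + \lVert \Phi(x_i) - y_i \rVert_2 + \lVert y_i - f(x_i) \rVert_2 + \lVert f(x_i) - f(x) \rVert_2.
\end{equation*}
I would then bound each term separately: by Proposition \ref{prop:lip_bound} the first term is at most $\gamma \lVert x - x_i \rVert_2$; the fourth is at most $K \lVert x - x_i \rVert_2$ by the hypothesis that $f$ is $K$-Lipschitz on $\cal{X}$; the second is $\lVert e_i \rVert_2$ by definition of $e_i$; and the third is at most $c$ by Assumption \ref{ass:bound_on_est_der}. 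Combining,
\begin{equation*}
\lVert \Phi(x) - f(x) \rVert_2 \leq (K+\gamma)\lVert x - x_i \rVert_2 + \lVert e_i \rVert_2 + c.
\end{equation*}

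The main subtlety is that the cover uses $\ell_p$ balls while the Lipschitz bounds are in $\ell_2$, so I need to convert $\lVert x - x_i \rVert_2$ into an $\ell_p$ quantity. Using the standard norm-equivalence inequalities in $\R^n$: for $p \leq 2$ one has $\lVert v \rVert_2 \leq \lVert v \rVert_p$ (factor $1$), while for $p \geq 2$ one has $\lVert v \rVert_2 \leq n^{1/2 - 1/p} \lVert v \rVert_p$. Both cases are captured uniformly by the constant $n^{\max(0, \frac{1}{2} - \frac{1}{p})}$. Hence $\lVert x - x_i \rVert_2 \leq n^{\max(0, \frac{1}{2} - \frac{1}{p})} r_i$.

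Substituting this into the previous inequality and taking the maximum over all $(x_i, y_i) \in \cal{D}$ (which is valid because the bound holds for the specific $i$ selected from the cover, and is upper bounded by the max) yields the claimed estimate. Since $x \in \cal{X}$ was arbitrary, the same bound holds for $\max_{x \in \cal{X}} \lVert \Phi(x) - f(x) \rVert_2$. I do not anticipate a serious obstacle; the only point requiring care is keeping the norm-conversion factor tight and correctly identifying $\max(0, \tfrac{1}{2} - \tfrac{1}{p})$ as the unifying exponent across all $p \in [1, \infty]$.
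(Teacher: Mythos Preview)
Your proposal is correct and follows essentially the same approach as the paper: both fix a point in $\cal{X}$, pick a covering ball, apply the norm-equivalence factor $n^{\max(0,\frac{1}{2}-\frac{1}{p})}$ to convert $\ell_p$ to $\ell_2$, and use the triangle inequality together with Assumption~\ref{ass:bound_on_est_der}. The only cosmetic difference is that the paper bounds $\lVert \Phi(x)-f(x)\rVert_2 - \lVert \Phi(x_i)-f(x_i)\rVert_2$ in one step via $\Lip(\Phi - f)\leq K+\gamma$, whereas you split it into separate Lipschitz estimates for $\Phi$ and $f$ before recombining.
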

\begin{proof}
    Take $(x_i, y_i) \in \cal{D}$. For all $x \in B_p (x_i, r_i)$, we have $\lVert x - x_i \rVert_2 \leq n^{\max(0,\frac{1}{2} - \frac{1}{p})} r_i$ by norm equivalence. Then, $\lVert \Phi(x) - f(x) \rVert_2 \leq n^{\max(0,\frac{1}{2} - \frac{1}{p})} (K+\gamma)r_i + \lVert \Phi(x_i) - f(x_i) \rVert_2$ as $\Lip(\Phi - f) \leq K+\gamma$. By Assumption \ref{ass:bound_on_est_der}, we further have $\lVert \Phi(x_i) - f(x_i) \rVert_2 \leq \lVert \Phi(x_i) - y_i \rVert_2 + \lVert y_i - f(x_i) \rVert_2 \leq \lVert e_i \rVert_2 + c$.
    Then, for all $x \in B_p (x_i, r_i)$, $\lVert \Phi(x) - f(x) \rVert_2 \leq n^{\max(0,\frac{1}{2} - \frac{1}{p})} (K+\gamma)r_i + \lVert e_i \rVert_2 + c$. As $C$ is a cover of $\cal{X}$, we have $\max_{x \in \cal{X}} \lVert \Phi(x) - f(x) \rVert_2 \leq c + \max_{(x_i, y_i) \in \cal{D}} \left[ n^{\max(0,\frac{1}{2} - \frac{1}{p})} (K+\gamma)r_i + \lVert e_i \rVert_2 \right] $.
\end{proof} 

\setlength\tabcolsep{2.2 pt}
\begin{table*}[t]
    \centering
    \caption{Lipschitz bounds and estimation error bounds of FCNs, LRNs, and our method. The results include the mean and standard deviation for trained networks with the best test MSEs, using 100\% of the training data across four random seeds (0, 100, 200, and 300).}
    \label{tab:max_error}
    \begin{tabular}{cccccccccc}
        \toprule
         &  \multicolumn{3}{c}{Linear system}&  \multicolumn{3}{c}{Van der Pol Oscillator}&  \multicolumn{3}{c}{Two-link robotic arm}\\
         & Lip. bound & $\delta = 0.05$ & $\delta = 0.025$ & Lip. bound & $\delta = 0.05$ & $\delta = 0.025$ & Lip. bound & $\delta = 0.05$ & $\delta = 0.025$ \\
         \midrule
         FCNs&  $18.91 \pm 5.08$ &  $2.79 \pm 0.67$ & $2.36 \pm 0.56$ &  $5.27 \pm 0.19$ &  $1.12 \pm 0.03$ & $1.10 \pm 0.03$ &  $15.18 \pm 5.06$ &  $4.70 \pm 1.50$& $4.11 \pm 1.32$\\
         LRNs&  $14.19 \pm 2.56$ &  $2.19 \pm 0.33$ & $1.85 \pm 0.28$ &  $17.16 \pm 1.62$ &  $3.00 \pm 0.26$ & $2.94 \pm 0.25$ &  $10.89 \pm 0.63$ &  $3.43 \pm 0.19$ & $3.00 \pm 0.16$\\
         Ours&  $\bm{2.01}$ & $\bm{0.64 {\scriptstyle \pm 0.0009}}$ & $\bm{0.51 {\scriptstyle \pm 0.0014}}$ & $\bm{4.02}$ & $\bm{0.92 {\scriptstyle \pm 0.0004}}$ & $\bm{0.90 {\scriptstyle \pm 0.0005}}$ &  $\bm{2.55}$ & $\bm{0.95 {\scriptstyle \pm 0.0016}}$ & $\bm{0.83 {\scriptstyle \pm 0.0013}}$ \\
         \bottomrule
    \end{tabular}
\end{table*}

\setlength\tabcolsep{2.2 pt}
\begin{table*}[t]
    \centering
    \caption{Mean and standard deviation of the test MSE of FCNs, LRNs, and our method trained on 25\%, 50\%, 100\% of the training data. Training is repeated using four random seeds (0, 100, 200, and 300).}
    \label{tab:test_mse}
    \begin{tabular}{cccccccccc}
        \toprule
         &  \multicolumn{3}{c}{Linear system ($10^{-3}$ units)}&  \multicolumn{3}{c}{Van der Pol Oscillator ($10^{-3}$ units)}&  \multicolumn{3}{c}{Two-link robotic arm ($10^{-3}$ units)}\\
         $\%$ & $25\%$ & $50\%$ & $100\%$ & $25\%$ & $50\%$ & $100\%$ & $25\%$ & $50\%$ & $100\%$ \\
         \midrule
         FCNs&  $5.47 \pm 0.04$ &  $5.41 \pm 0.03$ & $5.41 \pm 0.03$ &  $2.71 \pm 0.01$ &  $2.69 \pm 0.01$ & $2.68 \pm 0.01$ &  $2.71 \pm 0.01$ &  $2.70 \pm 0.01$& $\bm{2.68 \pm 0.01}$\\
         LRNs&  $8.03 \pm 0.5$ &  $6.28 \pm 0.08$ & $5.77 \pm 0.03$ &  $3.01 \pm 0.10$ &  $2.79 \pm 0.02$ & $2.71 \pm 0.01$ &  $2.74 \pm 0.01$ &  $2.72 \pm 0.01$ & $2.71 \pm 0.01$\\
         Ours&  $\bm{5.44\pm 0.04}$ & $\bm{5.40 \pm 0.04}$ & $\bm{5.40 \pm 0.03}$ & $\bm{2.69 \pm 0.01}$ & $\bm{2.68 \pm 0.01}$ & $\bm{2.67 \pm 0.01}$ &  $\bm{2.70 \pm 0.01}$ & $\bm{2.69 \pm 0.01}$ & $\bm{2.68 \pm 0.01}$ \\
         \bottomrule
    \end{tabular}
\end{table*}

\begin{algorithm}[t]
\SetAlgoLined
\KwIn{Neural network $\Phi$, dataset $\cal{D}$, Lipschitz bound $K$ of \eqref{eq:dynamics}, state space $\cal{X}$, radius of lattices $\delta$.} 

Discretize $\cal{X}$ into $N_l$ lattices $\cal{G}_1, ..., \cal{G}_{N_l}$, build a k-d tree of $\cal{D}$, and initialize the estimation error bound $\Delta = 0$\;

Calculate the Lipschitz bound $\gamma$ of $\Phi$\;

\For{$i = 1, 2, ..., N_l$}{
    Query the k-d tree for the data points inside $\cal{G}_i$. If none, return the top $q$ closest data points to the lattice center $\bar{x}_i$. The obtained data points are labeled as: $\cal{S}_i = \{ (x_1, y_1), ..., (x_q, y_q) \}$\;

    Initialize the estimation error on $\cal{G}_i$ by $e_i = \infty$\;
    
    \For{$j = 1, 2, ..., |\cal{S}_i|$}{
        $\epsilon_j = \lVert \Phi(x_j) - y_j \rVert_2 + (K+\gamma) \max_{k} \lVert x_j - v_k \rVert_2 $ where $v_k$ are the vertices of $\cal{G}_i$\;

        Update $e_i$ by $e_i = \min(e_i, \epsilon_j)$\;
    }

    Update $\Delta$ by $\Delta = \max(\Delta, e_i)$\;
}

Return $\Delta$ as the bound on estimation error\;
\caption{Bounding the estimation error}
\label{algo:est_err}
\end{algorithm}

\begin{figure}
    \centering
    \includegraphics[trim={5 10 5 10}, width=0.55\linewidth]{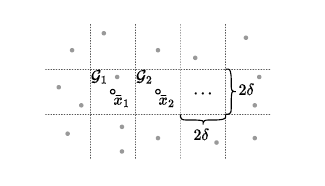}
    \caption{Discretization of the state space $\cal{X}$ in 2D. Solid gray circles ($\bullet$) represent data points in the dataset $\cal{D}$. Empty circles ($\circ$) represent lattice centers. }
    \label{fig:lattices}
\end{figure}

A practical way of finding the maximal estimation error is presented in Algorithm~\ref{algo:est_err}. The state space $\cal{X}$ is discretized into 
$N_l$ lattices \cite{xiang2018output}, and each lattice is a $\ell_\infty$ ball 
\begin{equation}
    \cal{G}_i = \{ x \in \R^n \mid \lVert x - \bar{x}_i  \rVert_\infty \leq \delta \}
\end{equation}
where $\bar{x}_i$ is the center of this ball (see Fig.~\ref{fig:lattices}). $\bar{x}_i$ are generated such that $\cal{X} \subseteq \union_{i=1,..., N_l} \cal{G}_i$. We first bound the estimation error for each lattice. The estimation error over $\cal{X}$ can then be bounded by the maximum of the bounds among all the lattices. For practical purposes, we take the labels $y_i$ as the ground truth in Algorithm~\ref{algo:est_err}, because we do not have exact knowledge of $f$. As the query of a k-d tree has complexity of $O(\log(|\cal{D}|))$, the complexity of Algorithm~\ref{algo:est_err} is $O(\log(|\cal{D}|) N_l)$.

Finally, let $z(t)$ be the solution to 
\begin{equation}\label{eq:est_dynamics}
    \dot{z} = \Phi(z),
\end{equation}
and the following result gives us a bound on $\lVert x(t) - z(t) \rVert_2 $ when \eqref{eq:dynamics} and \eqref{eq:est_dynamics} share the same initial conditions. 

\begin{proposition}
    Let $\cal{X}$ be compact, and $x(t)$ and $z(t)$ be the solutions to \eqref{eq:dynamics} and \eqref{eq:est_dynamics} with $x_0 = z_0$, respectively. If $\max_{x \in \cal{X}} \lVert \Phi(x) - f(x) \rVert_2 \leq a $, then we have $\lVert x(t) - z(t) \rVert_2 \leq \frac{a}{\gamma} (e^{\gamma t} -1)$.
\end{proposition}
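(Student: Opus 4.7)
The plan is to reduce this to a Gr\"{o}nwall-type estimate by expressing the gap $x(t)-z(t)$ as an integral of the two vector fields evaluated along the respective trajectories, then splitting that integrand into an approximation error piece (bounded by $a$) and a Lipschitz continuity piece (bounded by $\gamma \lVert x(s)-z(s)\rVert_2$).

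Concretely, first I would use the shared initial condition $x_0 = z_0$ together with the fundamental theorem of calculus to write
\begin{equation*}
x(t) - z(t) \;=\; \int_0^t \bigl( f(x(s)) - \Phi(z(s)) \bigr)\, ds.
\end{equation*}
Then I would add and subtract $\Phi(x(s))$ inside the integrand to get the decomposition
\begin{equation*}
f(x(s)) - \Phi(z(s)) \;=\; \bigl(f(x(s)) - \Phi(x(s))\bigr) \;+\; \bigl(\Phi(x(s)) - \Phi(z(s))\bigr).
\end{equation*}
Taking $\lVert \cdot \rVert_2$ inside the integral (triangle inequality) and using the hypothesis $\max_{x \in \mathcal{X}} \lVert \Phi(x)-f(x)\rVert_2 \le a$ on the first term and Proposition~\ref{prop:lip_bound} ($\Phi$ is $\gamma$-Lipschitz) on the second, I obtain
\begin{equation*}
u(t) \;\defeq\; \lVert x(t) - z(t) \rVert_2 \;\le\; a\,t + \gamma \int_0^t u(s)\, ds.
\end{equation*}

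Applying the integral form of Gr\"{o}nwall's inequality (or equivalently solving the ODE $\dot{w} = \gamma w + a$ with $w(0)=0$ as an upper-bounding comparison) immediately yields $u(t) \le \frac{a}{\gamma}(e^{\gamma t}-1)$, which is the desired bound.

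The only real subtlety, and the step worth being careful about, is making sure the approximation error hypothesis $\lVert \Phi(x)-f(x)\rVert_2 \le a$ is legitimately applicable along the entire trajectory $x(s)$; this is fine because $x(s) \in \mathcal{X}$ by the setup of \eqref{eq:dynamics} on the state space $\mathcal{X}$, where $f$ is defined and where the uniform bound holds. Once that point is noted, the rest is routine.
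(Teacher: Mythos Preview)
Your proof is correct and follows essentially the same route as the paper: the same add-and-subtract decomposition $f(x)-\Phi(z) = (f(x)-\Phi(x)) + (\Phi(x)-\Phi(z))$, bounded termwise by $a$ and $\gamma\,\lVert x-z\rVert_2$, followed by a standard ODE comparison. The only cosmetic difference is that the paper differentiates $d(t)^2$ and invokes the comparison lemma to get $\dot d \le a + \gamma d$, whereas you use the integral form and Gr\"{o}nwall's inequality; these are equivalent, and your integral version in fact sidesteps the minor awkwardness of dividing by $d(t)$ near $t=0$.
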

\begin{proof}
    Let $d(t) = \lVert x(t) - z(t) \rVert_2$. Differentiate $d(t)^2$ w.r.t. $t$, and we have
    $
        2 d(t) \dot{d}(t) = 2 (\dot{x}(t) - \dot{z}(t))^\top (x(t) - z(t))
        = 2 (f(x(t)) - \Phi(z(t)))^\top (x(t) - z(t))
        \leq 2 \lVert f(x(t)) - \Phi(z(t)) \rVert_2 d(t).
    $
    As $d(t) \geq 0$, $\dot{d}(t) \leq \lVert f(x(t)) - \Phi(z(t)) \rVert_2 $. Since $\Lip(\Phi) \leq \gamma$ and $\max_{x \in \cal{X}} \lVert \Phi(x) - f(x) \rVert_2 \leq a $, we have $
        \dot{d}(t) \leq \lVert f(x(t)) - \Phi(x(t)) \rVert_2 + \lVert \Phi(x(t)) - \Phi(z(t)) \rVert_2 \leq a + \gamma d(t).
    $
    Finally, as $d(0) = 0$, it follows that $d(t) \leq \frac{a}{\gamma} (e^{\gamma t} -1)$ by the comparison lemma \cite[Lemma B.2]{khalil2015nonlinear}. 
\end{proof}

\section{Simulation Studies}
\begin{table}[t]
    \centering
    \caption{Structures of FCNs, LRNs, Lipschitz neural networks (ours) used for the simulation studies. $\operatorname{ReLU}(z) = \max(0, z)$. $\operatorname{LeakyReLU}(z) = \max(0, z) + 0.01 \min(0, z)$.}
    \label{tab:nn_structures}
    \begin{tabular}{ccc}
        \toprule
         &  Layer dimensions&  Activation function\\
         \midrule
         FCNs&  [64,64,64,64,64,64,64,2]&  ReLU\\
         LRNs&  [64,64,64,64,64,64,64,2]&  LeakyReLU\\
         Ours&  [64,64,64,64,64,64,64,2]&  ReLU\\
        \bottomrule
    \end{tabular}
\end{table}

In this section, we demonstrate the effectiveness of our approach proposed in Section \ref{sec:sys_id_through_lip_nn} on three simulated dynamical systems: a linear system, the Van der Pol oscillator, and a two-link planar robotic arm.

The observations of the system are the trajectories of $x(t)$ sampled at 100~\si{Hz} with a certain level of measurement noise. We perform low-pass filtering on the collected trajectories of $x(t)$ and approximate the value of $\dot{x}(t)$ by $\widehat{\dot{x}}(t)$, which is obtained from the fourth-order central difference method. As described in Section \ref{sec:sys_id}, we form the dataset $\cal{D}$ and randomly split\footnote{For comparison studies, the test-train split and downsampling of $\cal{D}_{\mathrm{train}}$ is the same across different trials.} it into a training set $\cal{D}_{\mathrm{train}}$ (80\%) and a test set $\cal{D}_{\mathrm{test}}$ (20\%). We take the test MSE (the MSE calculated on the test set $\cal{D}_{\mathrm{test}}$) as the figure of merit and compare our method with classic fully connected networks (FCNs) and Lipschitz regularized networks (LRNs) in \cite{negrini2021system, negrini2023robust}. In contrast with our approach, the authors of \cite{negrini2021system} add a Lipschitz regularization term $\beta \widehat{\Lip}(\Psi)$ (where $\beta>0$ is the weight and $\widehat{\Lip}(\Psi)$ is the Lipschitz constant of the LRN $\Psi$ estimated on the training batch) to the MSE loss during the training of the neural networks, making it a suitable comparison. 

For FCNs, the main hyperparameter is the weight decay coefficient (equivalent to $\ell_2$ regularization). We repeat Algorithm~\ref{algo:sys_id} and report the best test MSE by varying the weight decay coefficient in $\{10^{-8},\allowbreak 10^{-7},\allowbreak 10^{-6},\allowbreak 10^{-5},\allowbreak 10^{-4},\allowbreak 10^{-3},\allowbreak 10^{-2},\allowbreak 10^{-1}\}$. For LRNs\footnote{Implementation based on their \href{https://github.com/enegrini/System-identification-through-Lipschitz-regularized-neural-networks.git}{code}.}, the main hyperparameter is the weight of the Lipschitz regularization term $\beta$. Similarly, we report the best test MSE by varying this weight in $\{10^{-8},\allowbreak 10^{-7},\allowbreak 10^{-6},\allowbreak 10^{-5},\allowbreak 10^{-4},\allowbreak 10^{-3},\allowbreak 10^{-2},\allowbreak 10^{-1} \}$. 

In each example, we report the Lipschitz bound and estimation error bound (on $\cal{X}$) (Table~\ref{tab:max_error}), the best test MSE for the three methods (Table~\ref{tab:test_mse}), and the structure of the neural networks (Table~\ref{tab:nn_structures}). In Table~\ref{tab:max_error}, the Lipschitz bound of FCNs and LRNs are calculated after the training using LipSDP proposed in \cite{fazlyab2019efficient}, while that of Lipschitz neural networks is fixed and calculated according to Proposition~\ref{prop:lip_bound}. After determining the Lipschitz bound, we use Algorithm~\ref{algo:est_err} to calculate the bound on the estimation error. In Table~\ref{tab:nn_structures}, LeakyReLU is used for LRNs according to \cite{negrini2021system}. Table~\ref{tab:max_error} shows our method achieves the best theoretical bound.

\begin{figure*}[t]
    \centering
    \begin{subfigure}[b]{0.23\textwidth}
        \centering
        \includegraphics[width=\linewidth]{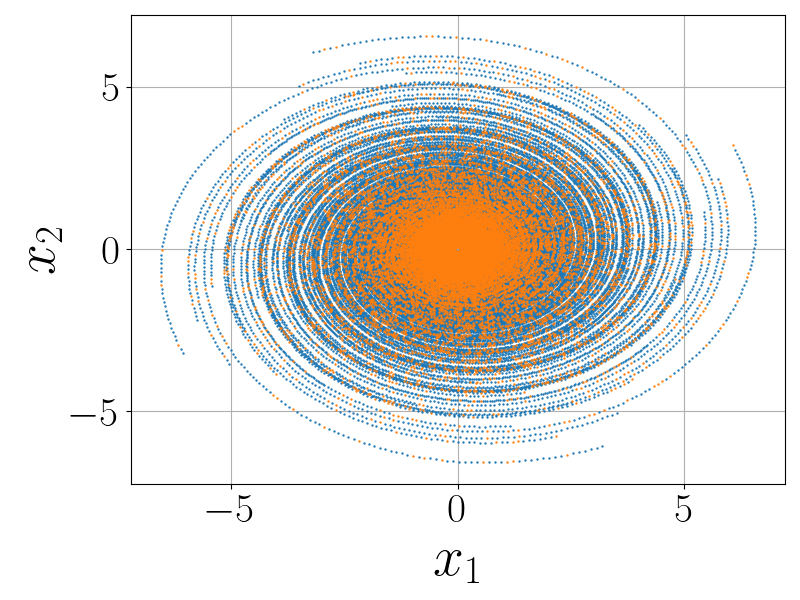}
        \caption{Linear system.}
        \label{fig:eg1_data}
    \end{subfigure}
    \hspace{1pt}
    \begin{subfigure}[b]{0.23\textwidth}
        \centering
        \includegraphics[width=\linewidth]{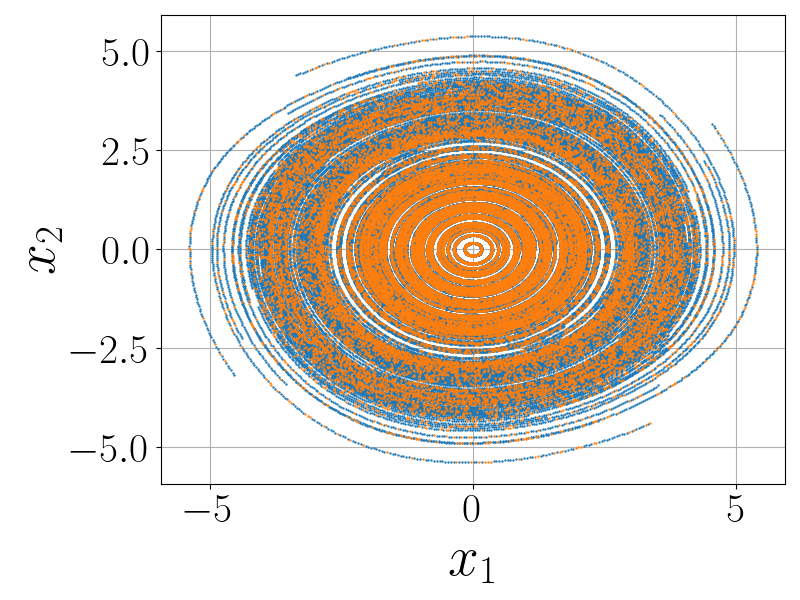}
        \caption{Van der Pol oscillator.}
        \label{fig:eg2_data}
    \end{subfigure}
    \hspace{1pt}
    \begin{subfigure}[b]{0.5\textwidth}
        \centering
        \includegraphics[width=0.46\linewidth]{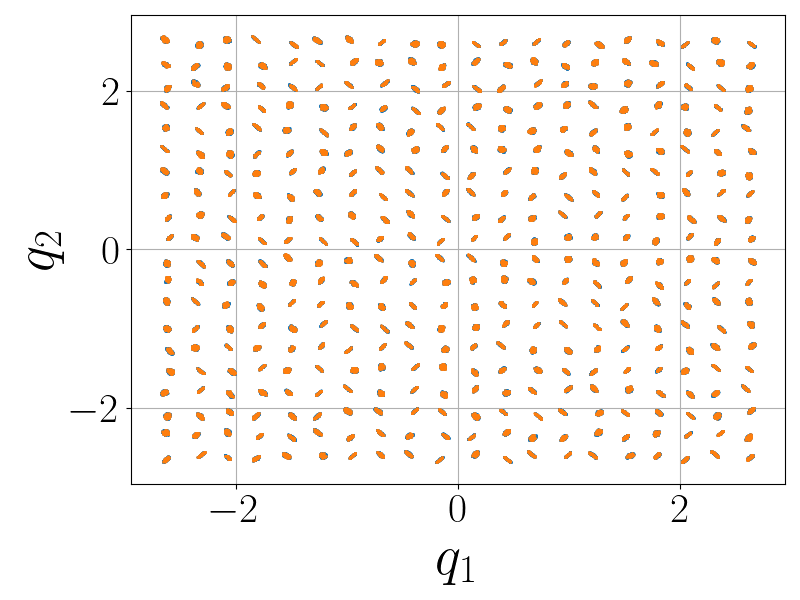}
        \hspace{1 pt}
        \includegraphics[width=0.46\linewidth]{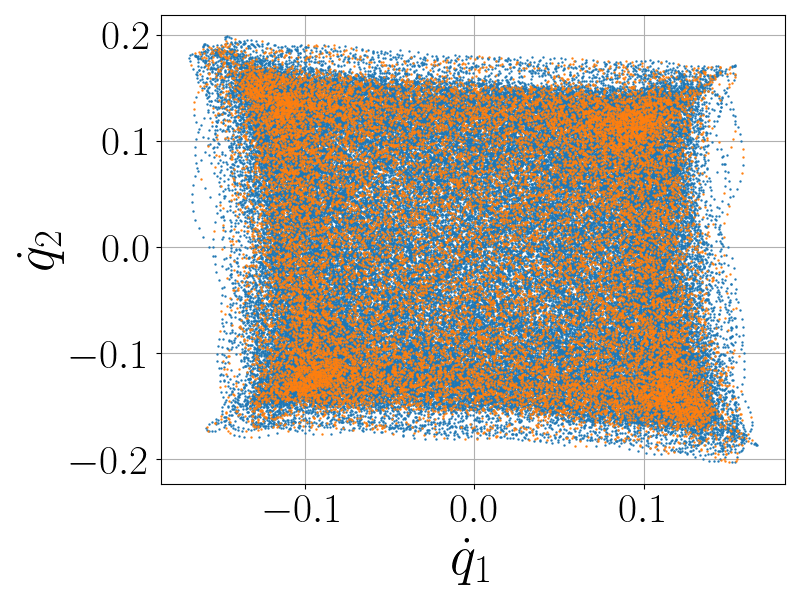}
        \caption{Two-link robotic arm. Left: $(q_1, q_2)$ plane. Right: $(\dot{q}_1, \dot{q}_2)$ plane.}
        \label{fig:eg3_data}
    \end{subfigure}
    \caption{Visualization of training (in blue) and test (in orange) data for the simulation studies.}
    \label{fig:data}
\end{figure*}

\begin{figure*}[t]
    \centering
    \begin{subfigure}[b]{0.27\textwidth}
        \centering
        \includegraphics[width=\linewidth]{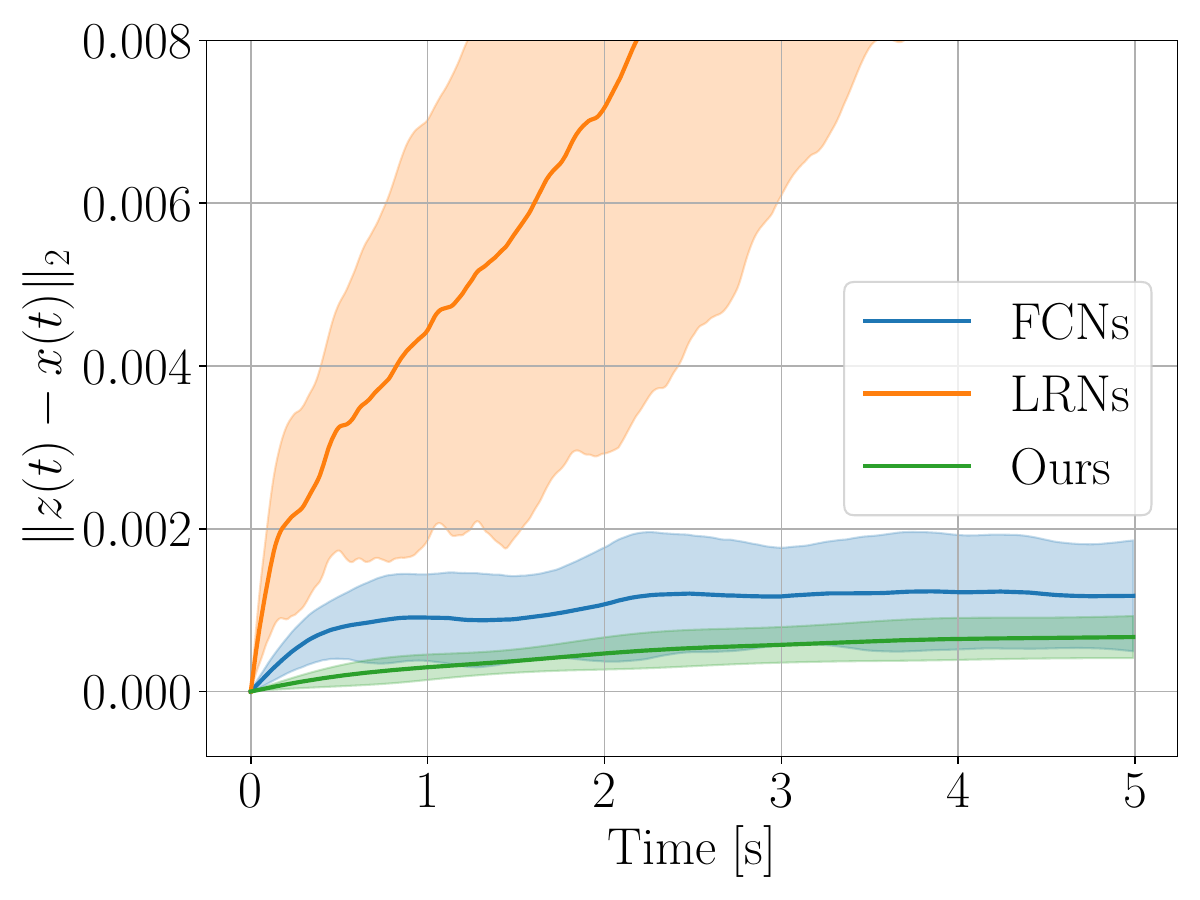}
        \caption{Linear system.}
        \label{fig:eg1_traj_error}
    \end{subfigure}
    \hspace{1pt}
    \begin{subfigure}[b]{0.27\textwidth}
        \centering
        \includegraphics[width=\linewidth]{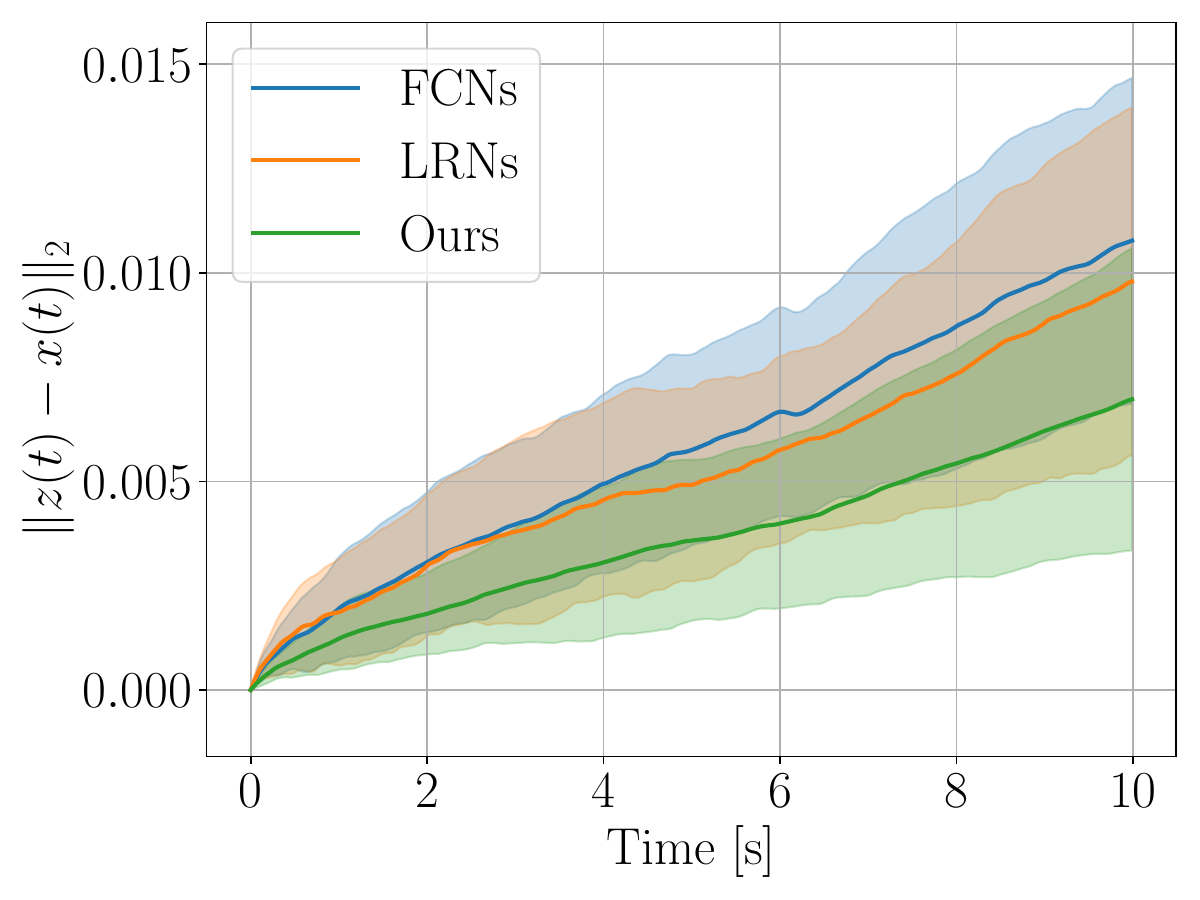}
        \caption{Van der Pol oscillator.}
        \label{fig:eg2_traj_error}
    \end{subfigure}
    \hspace{1pt}
    \begin{subfigure}[b]{0.27\textwidth}
        \centering
        \includegraphics[width=\linewidth]{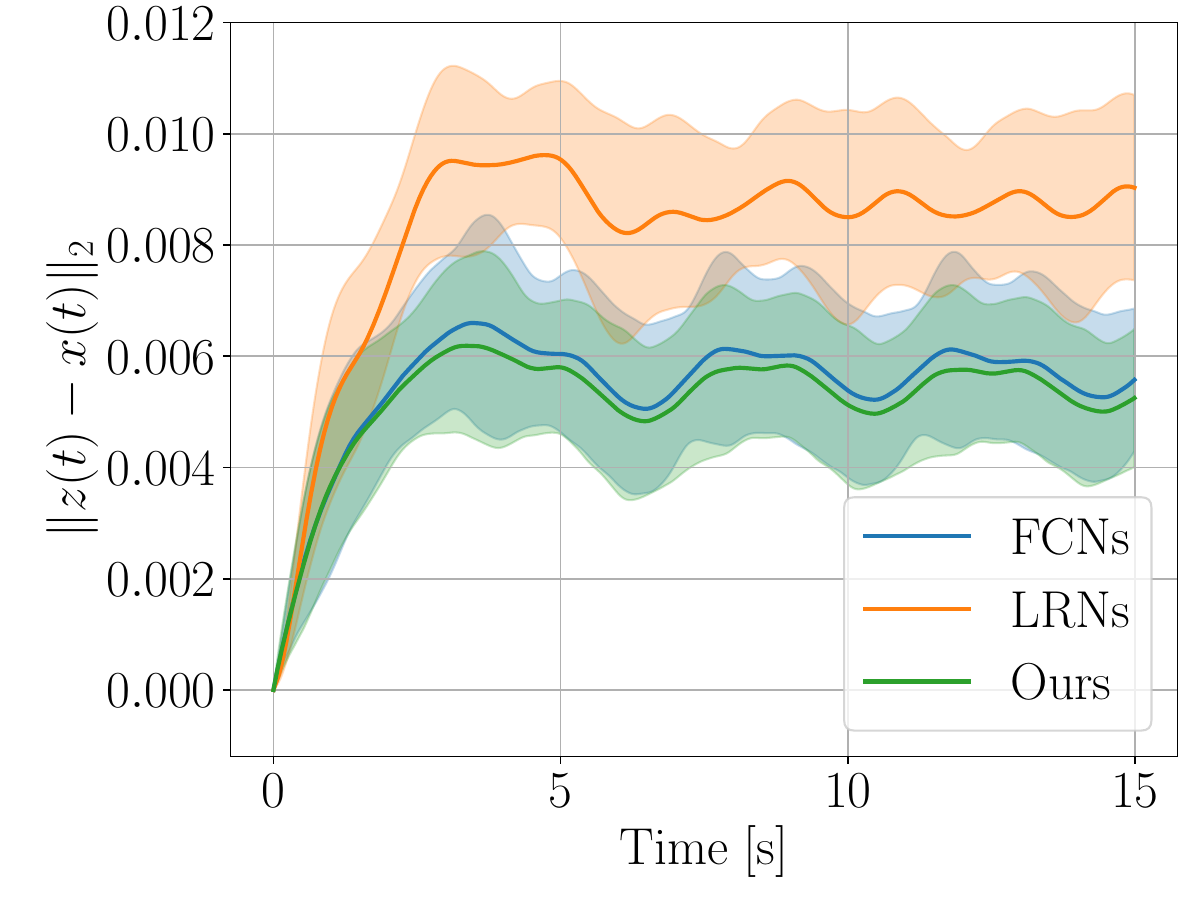}
        \caption{Two-link robotic arm.}
        \label{fig:eg3_traj_error}
    \end{subfigure}
    \caption{Mean and standard variation of the $\ell_2$ error of the simulated trajectories (using 100\% training data).}
    \label{fig:traj_error}
\end{figure*}

\subsection{Linear System}
\begin{figure}[t]
    \centering
    \includegraphics[width=0.45\textwidth]{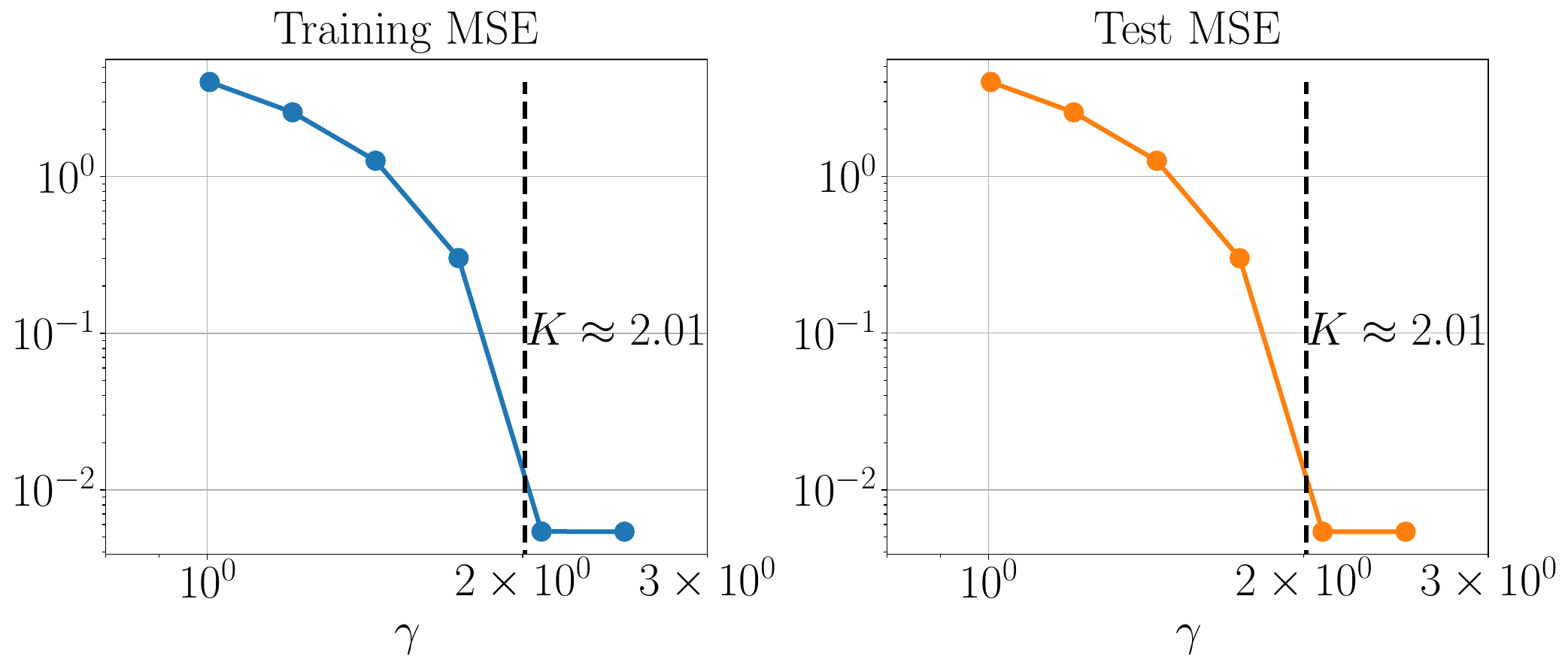}
    \caption{Training (left) and test (right) MSEs of Lipschitz neural networks w.r.t. the Lipschitz bound (using 100\% training data).}
    \label{fig:eg1_expressivity}
\end{figure}

We first test our approach on the following linear system
\begin{equation}\label{eq:linear_sys}
    \dot{x}_1 = - 0.2 x_1 + 2 x_2, \quad \dot{x}_2 = -2 x_1 - 0.2 x_2.
\end{equation}
This is a stable system, and the poles are $-0.2 \pm 2 i$. We are interested in fitting the dynamic over $\cal{X} = \{ (x_1, x_2) \mid -3 \leq x_1, x_2 \leq 3 \}$. We sample 100 trajectories of 12 seconds, resulting in a total of 120k data points (see Fig.~\ref{fig:eg1_data}). The outputs are $y_1 = x_1 + \omega_1$ and $y_2 = x_2 + \omega_2$ where $\omega_1, \omega_2 \sim \cal{N}(0,\sigma^2)$ with $\sigma^2 = 10^{-4}$. In Table~\ref{tab:test_mse}, we see that our method has lower test MSE than FCNs and LRNs in general, and the advantage is more evident when using only 25\% of the training data. In Table~\ref{tab:max_error}, our method has both the smallest Lipschitz bound (89.37\% smaller than FCNs and 85.84\% than LRNs) and the smallest estimation error bound (77.06\% smaller than FCNs and 70.78\% than LRNs when $\delta = 0.05$, and 78.39\% than FCNs and 72.43\% than LRNs when $\delta = 0.025$). Table~\ref{tab:max_error} also shows that soft regularization can result in a large variation of the Lipschitz bounds for both FCNs and LRNs. In addition, we simulate 100 trajectories using the learned dynamics (models with the best test MSE), with initial conditions uniformly sampled from $\cal{X}$. We then visualize the mean and standard deviation of the $\ell_2$ error relative to the trajectories generated by the true dynamics in Fig.~\ref{fig:eg1_traj_error}. We see that our method achieves the smallest error with the lowest variance among the three methods.

The Lipschitz bound of the RHS of \eqref{eq:linear_sys} is estimated to be $K \approx 2.01$ using finite difference on the collected data (assuming that the sampled points are sufficiently dense). Figure~\ref{fig:eg1_expressivity} demonstrates an underfitting behavior when $\gamma$, the Lipschitz bound of $\Phi$, is much smaller than $K$. This suggests that choosing a $\gamma$ that is at least comparable with $K$ is an effective strategy for enhancing performance.  

We remark that LRNs are more significantly affected by the amount of training data than both FCNs and our method. During the training of LRNs, the practical estimation of the Lipschitz constant is carried out among the training data, which is typically local and, most importantly, dependent on the amount of training data. Therefore, LRNs exhibit a large increase in test MSE when the training set is small.  

\subsection{Van der Pol Oscillator}
The Van der Pol oscillator, with dynamics
\begin{equation}\label{eq:van_der_pol}
\dot{x}_1 = x_2, \quad \dot{x}_2 = \mu (1 - x_1^2) x_2 - x_1,
\end{equation}
and $\mu >0$, is used as a system identification example in \cite{egan2024automatically}. For this example, $\mu = 0.02$ and $\cal{X} = \{ (x_1, x_2) \mid -2.5 \leq x_1, x_2 \leq 2.5 \}$. We sample 400 trajectories of 5 seconds, leading to a total of 200k data points (Fig.~\ref{fig:eg2_data}). The outputs are $y_1 = x_1 + \omega_1$ and $y_2 = x_2 + \omega_2$ where $\omega_1, \omega_2 \sim \cal{N}(0,\sigma^2)$ with $\sigma^2 = 5 \times 10^{-5}$. The Lipschitz bound of the RHS of \eqref{eq:van_der_pol} is estimated to be $K \approx 1.65$ on the collected data. Table~\ref{tab:test_mse} shows that our method has the lowest test MSEs for all three percentages of the training data. In Table~\ref{tab:max_error}, our method again has the smallest Lipschitz bound (23.72\% smaller than FCNs and 76.57\% than LRNs) and estimation error bound (17.86\% smaller than FCNs and 69.33\% than LRNs when $\delta = 0.05$, and 18.18\% than FCNs and 69.39\% than LRNs when $\delta = 0.025$). We again simulate 100 trajectories using the learned dynamics (models with the best test MSE), with initial conditions uniformly sampled from $\cal{X}$. Fig.~\ref{fig:eg2_traj_error} shows that our method achieves the smallest error compared with FCNs and LRNs.

\subsection{Two-Link Robotic Arm}
\begin{figure}[t]
    \centering
    \includegraphics[width=0.30\textwidth]{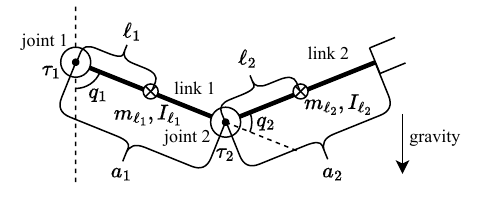}
    \caption{Two-link planar arm.}
    \label{fig:eg3_arm}
\end{figure}
Our method can be applied to learn the uncertainties in a controlled system. Consider the two-link robotic planar arm shown in Fig.~\ref{fig:eg3_arm}. $m_{\ell_i}$, $a_i$, and $I_{\ell_i}$ are the mass, length, and moment of inertia of link $i$, respectively. $\ell_i$ is the distance between the axis of joint $i$ and the center of mass of link $i$. $m_{m_i}$, $k_{r_i}$, and $I_{m_i}$ are the mass, gear reduction ratio, and moment of inertia of motor $i$ (which is located on the axis of joint $i$), respectively. $q_i$ and $\tau_i$ are the angle and the input torque of joint $i$, respectively. The dynamics of the arm are
\begin{equation}\label{eq:two_link_arm}
    \frac{dq}{dt} = \dot{q}, \quad \frac{d \dot{q}}{dt}  = M^{-1}(q) (\tau - C(q,\dot{q})\dot{q} - F_f(\dot{q}) - g(q))
\end{equation}
where $q = [q_1, q_2]^\top$, $\tau = [\tau_1, \tau_2]^\top$, $M(q) \in \R^{2 \times 2}$ is the inertia matrix, $C(q,\dot{q})\dot{q}$ captures the centrifugal and Coriolis effects,  $g(q) \in \R^2$ are the moments generated by the gravity, and $F_f(\dot{q})$ represents the unknown friction torques. Due to page limit, the exact expressions for $M(q)$, $C(q,\dot{q})$, and $g(q)$ are omitted and can be found in \cite[Section 7.3.2]{siciliano2008robotics}. In this example, we consider $F_f(\dot{q}) = F_v \dot{q} + F_c \tanh (s_c \dot{q})$ where $F_v \dot{q}$ are the viscous friction torques and $F_c \tanh (s_c \dot{q}) \approx F_c \sign (\dot{q})$ approximates the Coulomb friction torques. The parameter values used for this study are: $a_1 = a_2 = 0.8$~\si{m}, $\ell_1 = \ell_2 = 0.4$~\si{m}, $m_{\ell_1} = m_{\ell_2}=20$~\si{kg}, $I_{\ell_1} = I_{\ell_2} = 5$~\si{kg . m^2}, $k_{r 1} = k_{r 2}=100$, $m_{m_1} = m_{m_2} = 2$~\si{kg}, $I_{m_1} = I_{m_2} = 0.01$ ~\si{kg . m^2}, $F_v = \diag([40,40])$ in \si{kg . m^2 . s^{-1}} units, $F_c = \diag([2,2])$ in \si{kg . m^2. s^{-1}} units, and $s_c = 10$.

The task is to learn the term $M^{-1}(q) F_f(\dot{q})$ from data. We sample 400 trajectories of 3 seconds, generating a total of 120k data points (see Fig. \ref{fig:eg3_data}). The state space is $\cal{X} = \{ (q_1, q_2, \dot{q_1}, \dot{q}_2) \mid |q_1|, |q_2| \leq 3\pi/4, |\dot{q_1}|, |\dot{q}_2| \leq 0.1\}$. In each trajectory, the robot arm starts from initial joint angles $q_0$ and is controlled by $\tau = g(q)+C(q,\dot{q})\dot{q}+M(q)[-K_p (q - q_0) - K_d \dot{q}] + \epsilon (t)$ where $K_p = I$, $K_d = 2I$, and $\epsilon (t) = [100\sin(2\pi t + \varphi_1), 100\sin(2\pi t + \varphi_2)]^\top$ (with the phases $\varphi_1$ and $\varphi_2$ randomly generated for each trajectory). We collect the control inputs $\tau$ and the outputs $y = [q^\top, \dot{q}^\top]^\top + \omega$ where $\omega \sim \cal{N}(0, \sigma^2 I)$ and $\sigma^2 = 5 \times 10^{-5}$. The Lipschitz bound of the term $M^{-1}(q) F_f(\dot{q})$ is estimated to be $K \approx 0.59$ on the collected data. 

As in Table~\ref{tab:test_mse}, our method has the best test MSEs when using 25\% and 50\% training data and has the same test MSE as FCNs when using 100\% training data. In Table~\ref{tab:max_error}, our method again has the smallest Lipschitz bound (83.20\% smaller than FCNs and 76.58\% than LRNs) and estimation error bound (79.79\% smaller than FCNs and 72.30\% than LRNs when $\delta = 0.05$, and 79.81\% than FCNs and 72.33\% than LRNs when $\delta = 0.025$). We simulate 100 trajectories using the learned dynamics (with initial joint angles uniformly sampled from $\cal{X}$ and zero joint velocity), and $\tau = g(q)+C(q,\dot{q})\dot{q}+M(q)[-K_p (q - q_0) - K_d \dot{q}] + \epsilon (t)$ where $K_p = I$, $K_d = 2I$, and $\epsilon (t) = [30\sin(0.5 \pi t + \varphi_1), 30\sin(0.5 \pi t + \varphi_2)]^\top$ (with $\varphi_1$ and $\varphi_2$ randomly generated for each trajectory). Fig.~\ref{fig:eg2_traj_error} shows that our method achieves the smallest error compared with FCNs and LRNs.

\section{Conclusion}
In this work, we propose a novel system identification method using Lipschitz neural networks. Our theoretical analysis provides an approximation error bound and a bound on the difference between the simulated solution by our method and the solution to the true system under mild assumptions. Comparison studies demonstrate the potential of our method in enhancing the precision and reliability of system identification using neural networks.

\bibliographystyle{IEEEtran}
\bibliography{master}

\end{document}